\newcommand{\Style}[2]{#2}
\newcommand{\QED}{}
\newcommand{\True}{\mathcal{T}}
\newcommand{\False}{\mathcal{F}}
\newcommand{\IEval}[2]{\text{eval}(#1, #2)}
\newcommand{\IWidth}[1]{w(#1)}
\newcommand{\Sign}[1]{\mbox{sgn}(#1)}
\newcommand{\Iv}[1]{\mathbf{#1}}
\newcommand{\RR}{\mathbb{R}}
\newcommand{\Opt}[2]{#1 \;\;\text{subject to } #2}
\newcommand{\Long}[1]{}
\newtheorem{lemma}{Lemma}
\newtheorem{definition}{Definition}
\newtheorem{theorem}{Theorem}
\begin{document}

\title{Efficient Solution of a Class of Quantified Constraints with Quantifier
  Prefix Exists-Forall}
\Style{
\author{Milan Hlad{\'\i}k}
\address{Charles University, Prague, Czech Republic}
\email{hladik@kam.mff.cuni.cz}

\author{Stefan Ratschan}
\address{Institute of Computer Science, Academy of Sciences of the Czech
  Republic, Prague, Czech Republic\\
  ORCID: 0000-0003-1710-1513}
\email{stefan.ratschan@cs.cas.cz}
\thanks{Stefan Ratschan's work was supported by the Czech Science Foundation
  (GA{\v C}R) grant number P202/12/J060 with institutional support
  RVO:67985807. Milan Hladik's work was supported by the Czech Science
  foundation (GA{\v C}R) grant number P402-13-10660S.}}{
\author{Milan Hlad{\'\i}k and Stefan Ratschan}
\maketitle}

\begin{abstract}
  In various applications the search for certificates for certain properties (e.g., stability of dynamical systems, program
  termination) can be formulated as a quantified constraint solving problem with
  quantifier prefix exists-forall. In this paper, we present an algorithm for
  solving a certain class of such problems based on interval techniques in
  combination with conservative linear programming approximation. In comparison
  with previous work, the method is more general---allowing general Boolean
  structure in the input constraint, and more efficient---using splitting
  heuristics that learn from the success of previous linear programming approximations.
\end{abstract}

\maketitle
\section{Problem Description}

We study the problem of finding $x_1,\dots,x_r$ such that
\[\bigwedge_{i=1}^n \forall y_1,\dots, y_s\!\in\!
\Iv{B}_i\;\; \phi_i(x_1,\dots,x_r, y_1,\dots y_s)\] where each $\Iv{B}_i$ is a
box (i.e., Cartesian product of closed intervals) in
$\RR^s$ and each of the $\phi_1,\dots,\phi_n$ is a Boolean combination of
inequalities where for each $i\in \{1,\dots,n\}$ only one of those inequalities contains the variables
$x_1,\dots,x_r$ and this one inequality contains those variables only linearly%
\Long{\footnote{actually the algorithm can be easily extended to the case where the
  other disjunctions contain $x_1,\dots, x_r$, but probably it will not be
  (quasi)complete.}}. If no such $x_1,\dots,x_r$ exist, we want to detect
this. Here is an illustrating example: 
\[
\begin{array}{c}
\forall y_1\in [0,1], y_2\in [-1,1] \;\;\; [y_1\geq y_2 \vee x_1\sin(y_1)y_2 + x_2 y_1^2y_2\leq 0]\\
\wedge\\
\forall y_1\in [0,1], y_2\in [-1,1] \;\;\; [y_1< y_2 \vee x_1\cos(y_1)y_2 + x_2 y_1y_2^2\leq 0].
\end{array}
\]

We also study the extension of this problem to the case where the conjunction may---in
addition to constraints of the form \[\forall y_1,\dots, y_s\!\in\!
\Iv{B}_i\;\; \phi_i(x_1,\dots,x_r, y_1,\dots y_s)\]---also contain linear equalities
in the variables $x_1,\dots, x_r$ (the
equalities can be viewed as a conjunction of inequalities, but this violates the
condition that only one inequality contains $x_1,\dots, x_r$).

In an earlier paper~\cite{Ratschan:10}, we showed how to solve a special case, with restricted
Boolean structure. The contributions of this paper are: 
\begin{itemize}
\item The extension of the
approach to arbitrary Boolean structure. 
\item The design of splitting heuristics that improves the performance of the
  algorithm by orders of magnitude.
\end{itemize}

Constraints of this type occur in various applications. Especially, they are
useful in finding certificates for certain global system properties. For
example, a Lyapunov function~\cite{Khalil:02} represents a certificate for the stability of dynamical
systems. In the case of global stability, such a function has to fulfill certain
properties in the whole state space except for the original/equilibrium. After
using an ansatz (often also called template) of the Lyapunov function as a polynomial with parametric
coefficients, one can find the Lyapunov function by solving a universally
quantified problem for those parameters. The fact that polynomials are
linear in their coefficients corresponds to linearity of our variables
$x_1,\dots, x_r$. A similar situation occurs, for example in termination analysis of computer
programs~\cite{Podelski:04,Cousot:05}, or in the termination analysis of term-rewrite systems~\cite{Dershowitz:87}.

However, usually further work is usually necessary to apply the method studied
in this paper to such problems: In the case of Lyapunov functions, one has to
exempt one single point (the equilibrium) from the property which cannot be
directly expressed by the boxes $\Iv{B}_1,\dots,\Iv{B}_n$. In the case of termination
analysis, such constraints have to be solved for the
whole real space instead of boxes $\Iv{B}_1,\dots,\Iv{B}_n$. In an earlier paper, we
solved this problem for Lyapunov
functions~\cite{Ratschan:10,Giesl:14}. However, in other areas this is an open
area for further research.

In the polynomial case, constraints over the domain of real numbers with quantifiers can
always be solved due to Tarski's classical result that the first-order theory of
the real numbers allows quantifier elimination~\cite{Tarski:51}. The area of
computer algebra has developed impressive software packages for quantifier
elimination~\cite{Brown:04,Dolzmann:97b}. However, those still have problems with scalability in the number of
involved variables and, in general, they cannot solve non-polynomial problems.
This was the motivation for several approaches to use interval based techniques
for such problems, for special cases~\cite{Jaulin:96,Benhamou:00,Goldsztejn:09}
and for arbitrary quantifier structure~\cite{Ratschan:02f}.

Alternative approaches for finding Lyapunov function certificates are based on
techniques from real algebraic
geometry~\cite{Prajna:05b,Sankaranarayanan:13}. However, those methods cannot
solve general constraints of the form discussed in this paper (no general
Boolean structure, no non-polynomial constraints).

In termination analysis of term-rewrite systems constraints with
quantifier-prefix $\exists\forall$ are usually solved by first eliminating the universally
quantified variables using conservative approximation~\cite{Hong:98,Lucas:07},
and then solving the remaining, existentially quantified problem. Again, this
technique cannot solve general constraints of the form discussed in this paper (no general
Boolean structure, no non-polynomial constraints).

The structure of the paper is a as follows: In the next section, we will
introduce the basic algorithm for solving the constraints. In
Section~\ref{sec:heuristics} we will introduce splitting heuristics for the
algorithm. In Section~\ref{sec:equality-constraints} we will extend the
algorithm with equality constraints. In Section~\ref{sec:convergence} we will 
prove termination of the algorithm for all non-degenerate cases. In
Section~\ref{sec:variable_splitting} we discuss how, for a given box to split,
choose the variable of that box to split. In Section~\ref{sec:experiments} we
will provide the results of computational experiments with the algorithm. And in
Section~\ref{sec:conclusion} we will conclude the paper.

Throughout the paper, boldface variables denote objects that are intervals
or contain intervals (e.g., interval vectors or matrices).

\section{Basic Algorithm}

We use the following algorithm (which generalizes an
algorithm~\cite{Ratschan:10} that solves constraints of a more specific form
arising in the analysis of ordinary differential equations):
\begin{enumerate}
\item\label{item:iveval} For each $i\in \{ 1,\dots,n\}$, substitute the intervals of $\Iv{B}_i$
  corresponding to $y_1,\dots,y_s$ into $\phi_i$, and evaluate using interval
  arithmetic. As a result, all the inequalities that do not contain
  $x_1,\dots,x_r$ are simplified to an inequality of the form $\Iv{I}\leq 0$, where
  $\Iv{I}$ is an interval,  and 
  every inequality (one for each $i\in\{1,\dots,n\}$) that does contain $x_1,\dots,x_r$
  to an inequality of the form \[\Iv{p}_1 x_1 + \dots + \Iv{p}_r x_r\leq
  \Iv{q},\] where the $\Iv{p}_1,\dots, \Iv{p}_r, \Iv{q}$ are
  intervals.
\item Replace any inequality of the form $\Iv{I}\leq 0$ with a (not necessarily strictly) negative upper bound of $\Iv{I}$
 by the Boolean constant $\True$ (for ''true'').
\item\label{item:remove} Replace any inequality of the form $\Iv{I}\leq 0$ with a (strictly) positive lower bound of $\Iv{I}$
  by the Boolean constant $\False$ (for ''false'').
\item\label{item:simplify} Simplify the constraint further using basic reasoning
  with the Boolean constants $\True$ and $\False$ (e.g., simplify
  $\True\vee \phi$ to $\True$,
  $\False\vee\phi$ to $\phi$).
\item If the resulting constraint is a Boolean constant, we are done (if the result is the Boolean
  constant $\True$, every $x_1,\dots,x_r$ is a solution, if the result is
  $\False$, no solution exists).
\item\label{item:ivsys} If the constraint is an interval linear system of
  inequalities $\Iv{P}x\leq \Iv{q}$ (i.e., all disjunctions $\vee$ in the formula have
  been removed by the simplifications in Step~\ref{item:simplify}), we reduce
  the interval linear system to a linear system $Az\leq b$ using the method of Rohn and
  Kreslov{\'a}~\cite{Rohn:94,Hladik:13}, and solve this system using linear programming. If the result is not yet
  an interval linear system, we continue with the next step.
\item\label{item:split} If the previous step resulted in a solvable linear program, we have a
  solution to the original problem. If no, we choose an
  $i\in\{1,\dots,n\}$, split the box $\Iv{B}_i$ into pieces $\Iv{B}^1,\Iv{B}^2$, replace
  the original constraint by
\[
\begin{array}{c}
\bigwedge_{j\in \{1,\dots, i-1,i+1,\dots,n\}} [\forall y_1,\dots, y_s\!\in\!\Iv{B}_j\;\; \phi_j(x_1,\dots,x_r, y_1,\dots y_s)]\wedge\\
\begin{array}{l}
 [\forall y_1,\dots, y_s\!\in\! \Iv{B}^1\;\; \phi_i(x_1,\dots,x_r, y_1,\dots
 y_s)]\wedge\\ 
{}[\forall y_1,\dots, y_s\!\in\! \Iv{B}^2\;\; \phi_i(x_1,\dots,x_r, y_1,\dots y_s)]
\end{array}
\end{array}
\] and iterate from Step 1 of the algorithm.
\end{enumerate}

Note that this algorithm only splits boxes with bounds pertaining to the variables $y_1,\dots,y_s$ but
\emph{not} wrt. variables $x_1,\dots,x_r$. This is the main advantage of such an
algorithm over a naive algorithm that substitutes sample points for the free variables
$x_1,\dots,x_n$. Completeness can be preserved due to completeness of the
Rohn/Kreslov{\'a} algorithm which we now describe in more detail:

The basic idea is, to replace each variable $x_i$ by two non-negative variables
$x_i^1$ and $x_i^2$, and then to rewrite each term $[\underline{p_i}, \overline{p_i}]
x_i$ of the interval system of inequalities to $[\underline{p_i}, \overline{p_i}]
(x^1_i-x^2_i)$ which is equal to $[\underline{p_i}, \overline{p_i}] x^1_i- [\underline{p_i},
\overline{p_i}] x^2_i$. Based on the fact that the inequalities should
hold \emph{for all} elements of the intervals, we can now exploit the fact that
the $x_i^1$ and $x_i^2$ are non-negative. Hence, we can now replace the interval
coefficients of $x_i^1$ with their upper endpoint, and the interval coefficients
of $x_i^2$ with their lower endpoint, resulting in  $\overline{p_i} x_i^1 - \underline{p_i} x_i^2$. The
result is an linear system of inequalities of the form $\overline{P}
x^1-\underline{P} x^2 \leq q$, $x_1\geq 0$, $x_2\geq 0$. 

\section{An Informed Splitting Strategy}
\label{sec:heuristics}

The major building block of the algorithm that we left open is the splitting strategy: Which box to choose for splitting in Step~\ref{item:split} and along which variable to split it. In this section we will develop such a strategy. We will first describe the basic idea for the linear system $Az\leq b$ created in Step~\ref{item:ivsys} of the algorithm (sub-section~\ref{sec:basic-idea}), then we will study how to take into account the fact that the linear system was created from an interval linear system $\Iv{P}x\leq \Iv{q}$ by the algorithm of Rohn and Kreslov{\'a} (sub-section~\ref{sec:exploiting-structure}), then we will study how to ensure convergence of the strategy (sub-section~\ref{sec:ensuring-convergence}), take into account interval evaluation from Step~\ref{item:iveval} of the main algorithm (sub-section~\ref{sec:interval-evaluation}), and summarize the result into a sub-algorithm of our main algorithm (sub-section~\ref{sec:algorithm}).

\subsection{Basic Idea}
\label{sec:basic-idea}

 Our goal is to have a strategy that is 
\begin{itemize}
\item complete: if the problem has a solution, we will eventually find
  it\footnote{with the exception of degenerate cases, see Section~\ref{sec:convergence}}
\item efficient: the algorithm converges to a solution as fast as possible.
\end{itemize}

It is not too difficult to ensure completeness of the algorithm: Just ensure
that the width of all boxes goes to
zero~\cite{Ratschan:02f,Ratschan:10}. However, the result can be highly
inefficient: each split increases the size of the constraint to solve, slowing
down the algorithm. Hence it is essential to concentrate on splits that bring
the constraint closer to solvability. 

Since splitting heuristics for classical interval branch-and-bound (or
branch-and-prune) algorithms are
well-studied~\cite{Csendes:97,Douillard:08,Ratschan:02c} we assume that in
Step~\ref{item:ivsys} the algorithm already arrived at an interval linear system
of inequalities. We will try to come up with splits that bring the next linear program
closer to solvability. For achieving this, we need some measure of what it means
for an infeasible system of inequalities $Az\leq b$ (as created by
Step~\ref{item:ivsys} of the algorithm) to be close to solvability, which will lead us to a method for
determining how it can be brought closer to solvability by splitting.

The overall approach is to
\begin{enumerate}
\item use the minimum of the residual $\max_{i\in\{1,\dots,n\}} \;(Az-b)_i$, that is \[\min_{z} \max_{i\in\{1,\dots,n\}}
\;(Az-b)_i\]  as a measure of closeness to feasability (here the index $i$ denotes the $i$-th entry of the vector $Az-b$),
\item to compute the corresponding minimizer, and then to
\item use those splits that promise to improve the residual for this minimizer the most.
\end{enumerate}

For computing the minimum of the residual, we reformulate \[\min_{z} \max_{i\in\{1,\dots,n\}}
\;(Az-b)_i\] as the constrained optimization problem
 \[ \Opt{\min_{z,\rho} \rho}{\rho=\max_{i\in \{1,\dots,n\}} \;(Az-b)_i} \]
which is
 \[ \Opt{\min_{z,\rho} \rho}{\rho\geq (Az-b)_1,\dots, \rho\geq (Az-b)_n}, \]
from where we arrive at the linear program
\[ \Opt{\min_{z,\rho} \rho}{Az-b\leq [1,\dots,1]^T\rho}. \]

Let $z^*, \rho^*$ be the resulting minimizer. If the residual $\rho^*\leq 0$ then we know
that the system $Az\leq b$ is solvable. If not, then the constraint violation vector $Az^*-b$ provides information on
how much the individual constraints contribute to non-solvability.

We try to decrease the constraint violation of the row of $A$ for which $Az^*-b$
is maximal. Denote this row by $i$. The constraint corresponding to this row is of the form $a_1 z^*_1+\dots+a_{2r} z^*_{2r}\leq b$ where each coefficient $a_j, j\in\{1,\dots,2r\}$ results from an endpoint of some interval in the interval system $\Iv{P}x\leq \Iv{q}$. Now we want to choose a $j\in \{1,\dots, 2r\}$ such that splitting will aim at changing the coefficient $a_j$ as much as possible. We assume that the change that we can expect for coefficient $a_j$ if using such a split, is given by some real number $\delta_j$. Under this assumption, the inequality will change to \[a_1 z^*_1+\dots+ a_{j-1} z^*_{j-1}+ (a_j +\delta_j) z^*_j + a_{j+1} z^*_{j+1}+\dots+a_{2r} z^*_{2r}\leq b\] which is \[a_1 z^*_1+\dots+a_{2r} z^*_{2r}\leq b-\delta_j z_j^*,\] resulting in an improvement $-\delta_j z_j^*$. 

Hence we can expect the maximal improvement of the residual by choosing $j$ as 
\[ \arg \max_{j\in\{1,\dots,2r\}} -\delta_j z^*_j \]

For analyzing how $\delta_j$ should look like, we have to analyze the precise form of the system $Az-b$ which we will do in the next sub-section.


\subsection{Exploiting Structure}
\label{sec:exploiting-structure}
 

Now observe that the linear program that we used in the previous sub-section is not arbitrary, but is the
result of the Rohn/Kreslov{\'a} transformation of an interval system of linear
inequalities of the form \[\Opt{\min \rho}{\overline{P}x^1-\underline{P}x^2  
 -b \leq [1,\dots,1]^T\rho, x^1\geq 0, x^2\geq 0}.\]

Observe that the entries of the underlying interval linear system of inequalities $\Iv{P}x\leq \Iv{q}$ are created by interval evaluation (Step~\ref{item:iveval} of the main algorithm). Assuming that splitting shrinks large entries of the interval matrix $\Iv{P}$ more than small intervals, the change $\delta_j$ that we can expect for $a_j$ from
splitting is proportional to the width $\IWidth{\Iv{p}_{v(j)}}$ of the
corresponding interval $\Iv{p}_{v(j)}$ in the $i$-th row of
$(\Iv{p}_1,\dots,\Iv{p}_n)$ of $\Iv{P}$. However, since splitting results in a sub-interval $\Iv{p}_{v(j)}'\subseteq \Iv{p}_{v(j)}$, the expected change for lower bounds of intervals is positive, and for upper bounds of intervals is negative. 

Analyzing the left-hand side $\overline{P}x^1-\underline{P}x^2 -b$ of 
the linear program resulting from the Rohn/Kreslov{\'a} transformation, we observe that the $x^1$ have coefficient $\overline{P}$, that is, the sign of upper bounds is positive, and the $x^2$ have coefficient $-\underline{P}x^2$ that is, the sign of lower bounds is negative. Combining this with the fact that lower bounds will be increased and upper bounds be decreased by splitting, the expected change $\delta_j= -\IWidth{\Iv{p}_{v(j)}}$, resulting in \[ \arg \max_{j\in\{1,\dots,2r\}} -\delta_j z_j = \arg \max_{j\in\{1,\dots,2r\}} \IWidth{\Iv{p}_{v(j)}} z_j^*.\]

Now observe furthermore, that the coefficients of the linear program come in pairs that
refer to the two bounds the same intervals, and hence also their width is the same.
So, instead of \[ \arg \max_{j\in\{1,\dots,2r\}} \IWidth{\Iv{p}_{v(j)}} z_j^* \]
we can directly refer to the interval matrix:
\[ \arg \max_{j\in\{1,\dots,r\}} [\IWidth{\Iv{p}_j} \max\{ x^1_j, x^2_j\} ]\]
where $x^1_j$ and $x^2_j$ refer to the individual entries of the vectors of variables as introduced by the
Rohn/Kreslov{\'a} transformation. 

 We also note the following:
\begin{lemma}
\label{lem:4}
Let $\overline{P}$ and $\underline{P}$ be real matrices in $\RR^{n\times r}$ such that for every $i\in \{1,\dots,n\}$, $j\in \{1,\dots r\}$, $\underline{P}_{i,j}<\overline{P}_{i,j}$. Let $b$ a real vector in $\RR^n$. 
Then for every solution $x^1$, $x^2$ of the linear program \[\Opt{\min \rho}{\overline{P}x^1-\underline{P}x^2  
 -b \leq [1,\dots,1]^T\rho, x^1\geq 0, x^2\geq 0}\]
for every $j\in \{1,\dots, r\}$, either $x^1_j$ or $x^2_j$ is zero.
\end{lemma}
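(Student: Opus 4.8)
The plan is to argue by contradiction, exploiting the redundancy introduced by the Rohn/Kreslov{\'a} transformation, which replaces each original variable by the difference $x^1_j - x^2_j$: whenever both halves of such a pair are strictly positive, an optimal solution should be able to shrink them simultaneously without harm. Concretely, suppose some minimizer $(x^1, x^2, \rho)$ of the linear program violated the claim, so that there is an index $j \in \{1,\dots,r\}$ with $x^1_j > 0$ and $x^2_j > 0$. Put $\varepsilon := \min\{x^1_j, x^2_j\} > 0$.

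First I would build a competing feasible point by lowering both halves of the $j$-th pair by $\varepsilon$: set $\hat{x}^1 := x^1 - \varepsilon e_j$ and $\hat{x}^2 := x^2 - \varepsilon e_j$, where $e_j$ denotes the $j$-th unit vector. By the choice of $\varepsilon$, both vectors stay non-negative (and at least one of the two $j$-th entries becomes exactly zero), so the constraints $\hat{x}^1 \geq 0$ and $\hat{x}^2 \geq 0$ continue to hold.

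Next I would evaluate the effect on the left-hand side of the main inequality. Since only the $j$-th coordinate of each vector has moved, for every row $i$ one obtains \[ (\overline{P}\hat{x}^1 - \underline{P}\hat{x}^2 - b)_i = (\overline{P}x^1 - \underline{P}x^2 - b)_i - \varepsilon\,(\overline{P}_{i,j} - \underline{P}_{i,j}). \] This is precisely where the hypothesis $\underline{P}_{i,j} < \overline{P}_{i,j}$ enters: the factor $\overline{P}_{i,j} - \underline{P}_{i,j}$ is strictly positive, so together with $\varepsilon > 0$ each row of the left-hand side strictly decreases.

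Finally I would close the argument by comparing objective values. Feasibility of the original solution gives $(\overline{P}x^1 - \underline{P}x^2 - b)_i \leq \rho$ for every $i$, and combining this with the strict decrease just computed shows that every row of $\overline{P}\hat{x}^1 - \underline{P}\hat{x}^2 - b$ lies strictly below $\rho$. Taking the maximum over the finitely many rows therefore yields a number $\hat{\rho} < \rho$ for which $(\hat{x}^1, \hat{x}^2, \hat{\rho})$ is still feasible, contradicting the minimality of $\rho$; hence no such index $j$ can exist. I expect the only subtle point to be this strictness: the perturbation produces a genuine improvement only because $\overline{P}_{i,j} - \underline{P}_{i,j}$ is bounded away from zero, which is exactly the non-degeneracy assumption $\underline{P}_{i,j} < \overline{P}_{i,j}$. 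Without it the decrease could vanish and an optimal solution with both $x^1_j$ and $x^2_j$ positive could survive, so this hypothesis is the load-bearing part of the proof.
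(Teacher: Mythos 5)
Your proof is correct and follows essentially the same route as the paper's: both argue by contradiction, decreasing $x^1_j$ and $x^2_j$ simultaneously by $\varepsilon>0$ and using $\underline{P}_{i,j}<\overline{P}_{i,j}$ to obtain a strict improvement contradicting optimality. The paper merely phrases the row-wise decrease via the center/width decomposition $\overline{P}x^1-\underline{P}x^2 = \Iv{P}^c(x^1-x^2)+\Iv{P}^\triangle(x^1+x^2)$, which is the same computation you carry out directly (and you make the final step, exhibiting a feasible $\hat{\rho}<\rho$, more explicit than the paper does).
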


\begin{proof}
Let $\Iv{P}^c$ be the center of the interval matrix $\Iv{P}$, that is the matrix
that contains the midpoint of the corresponding intervals of $\Iv{P}$. Let
$\Iv{P}^\triangle$ be the matrix that contains for every entry the width of the
corresponding interval of $\Iv{P}$. Then the above linear program is equivalent
to \[ \min \rho, \Iv{P}^c(x^1-x^2) + \Iv{P}^\triangle (x^1+x^2)
 -b \leq [1,\dots,1]^T\rho, x^1\geq 0, x^2\geq 0 \]

Let $j\in \{1,\dots, r\}$ be arbitrary, but fixed, and assume that both $x^1_j$
and $x^2_j$ are non-zero. Then, we can replace $x^1_j$ by $x^1_j-\varepsilon$ and
$x^2_j$ by $x^2_j-\varepsilon$, where $\varepsilon>0$. As a result, the value of
the first term $\Iv{P}^c(x^1-x^2)$ stays unchanged, while the value of the
second term $\Iv{P}^\triangle (x^1+x^2)$ has decreased. Hence we can decrease
the minimum of the linear program, which is a contradiction to the assumption
that the original values $x^1_j$ or $x^2_j$ were a solution of the linear program.
 \QED
\end{proof}

\subsection{Ensuring Convergence}
\label{sec:ensuring-convergence}

The basic idea, as described in the previous section, does not result in a
converging method. We will demonstrate this on a concrete example, taking into
account the precise form of how the system of inequalities is created by the
method of Rohn and Kreslov{\'a}. For this, assume the interval
inequality \[ [-1,3]x_1+[-3,1]x_2\leq -2\] and the corresponding inequality \[ 3x_1^1
+x_1^2 +x_2^1 +3x_2^2\leq -2, x_1^1\geq 0, x_1^2\geq 0, x_2^1\geq 0, x_2^2\geq 0.\] The
resulting linear program \[ \Opt{\min_{x_1^1, x_1^2, x_2^1, x_2^2, \rho} \rho}{3x_1^1 +x_1^2 +x_2^1 +3x_2^2 +2 
\leq[1,\dots,1]^T\rho, x_1^1\geq 0, x_1^2\geq 0, x_2^1\geq 0, x_2^2\geq 0}\] has the solution
$\rho=2$, $x_1^1=0$, $x_1^2=0$, $x_2^1=0$, $x_2^2=0$ which corresponds to the
values $x_1=0$, $x_2=0$ of the original interval inequality. Evaluating our heuristics, we get 
\[ \arg \max_{j\in\{1,2\}} [\IWidth{\Iv{p}_j} \max \{ x^1_j, x^2_j\} ]=
 \arg\max_{j\in\{1,2\}} 0=0 \]

Hence our heuristics already compute the value $0$ for each coefficient, suggesting that no shrinking of interval coefficients is necessary
anymore (Theorem~\ref{thm:1} in Section~\ref{sec:convergence} will provide a more general characterization of such behavior). Still, we have not yet found a solution of $[-1,3]x+[-3,1]y\leq -2$, and the residual value $\rho>0$ correctly indicates this. Moreover, a shrinking of the first interval, for example, resulting in \[ [2,3]x+[-3,1]y\leq -2\] leads to a solvable system.

Analyzing the problem, we see that for the solution of $[2,3]x+[-3,1]y\leq -2$, $x\neq 0$! So the original
heuristics was misleading, since it mistakenly assumed $x=0$. In other words,
while the minimizer $z^*$ of the linear program $\min_{z,\rho} \rho, Az-b\leq
[1,\dots,1]^T\rho$ gives some orientation on which coefficients to
shrink, it need not necessarily be a solution of the original input constraint,
and hence may be misleading.

To fix the problem, we assume that the minimizer $x^1, x^2$ to the linear program only approximates the final solution of the input constraint that we are looking for. For each $j\in\{1,\dots, r\}$, the final solution might instead be located in an interval
$[x_j^1-x_j^2-\varepsilon, x_j^1-x_j^2+\varepsilon]$ around the corresponding solution $x_j^1-x_j^2$ of the interval system of linear
inequalities.

We will now analyze the corresponding changed value of the term $\max\{ x^1_j, x^2_j\}$ used in the computation of the heuristic value 
\[\arg \max_{j\in\{1,\dots,r\}} [\IWidth{\Iv{p}_j} \max\{ x^1_j, x^2_j\} ].\] 
In the case where $x_j^1-x_j^2\geq 0$, by Lemma~\ref{lem:4}, $x_j^2=0$. Hence the original value of the term $\max\{ x^1_j, x^2_j\}$ is $x^1_j$, and the corresponding changed value is $x_j^1+\varepsilon$. In the case where $x_j^1-x_j^2< 0$, the original value is $x_j^2$, and the changed value is $x_j^2+\varepsilon$. Putting those cases together, the changed value is $\max \{  x^1_j, x^2_j\}+\varepsilon$.


Hence the corresponding heuristic value can be up to 
\[ \arg \max_{j\in\{1,\dots,r\}} \big[\IWidth{\Iv{p}_j} [\max \{  x^1_j, x^2_j\}+\varepsilon]\big] \]
which we will use, for a user-provided constant $\varepsilon>0$. 

We will see later  (Section~\ref{sec:convergence}), that even if the constant $\varepsilon$ does not correctly estimate the difference between the  solution of the current linear program and a solution of the original constraint $\phi$, for constraints that have a non-degenerate solution, the resulting
method always converges to such a solution, if $\varepsilon>0$.

In general, we will use heuristics of the form
\[ \arg\max_{j\in\{1,\dots,r\}} h(\Iv{p}_j, x^1_j, x^2_j)\]
and show convergence under certain conditions of this function $h$.

\subsection{Interval Evaluation}
\label{sec:interval-evaluation}

Up to now, we know which row(s) of $\Iv{P}$ to split, that is, for which $i\in\{
1,\dots, n\}$ to split the box $\Iv{B}_i$. We also know, which bound of which interval
in that row of $\Iv{P}$ we want to decrease, but we still do not know which coordinate of $\Iv{B}_i$
result in the biggest decrease of that bound. 
For determining this, observe that each entry of the interval matrix $\Iv{P}$ results from interval
evaluation of a certain expression on the box $\Iv{B}_i$. Hence we need to infer, for
a given arithmetical expression and an interval for each variable in that
expression, which split of an interval results in the biggest decrease of the
given resulting (lower or upper) bound of interval evaluation. There are many possible choices for this. Hence our
approach will be parametric in the concrete method used. We will assume a
function $\text{splitheur}$ such that for a given arithmetical expression $t$,
box $\Iv{B}$, and sign $s\in \{ -, +\}$, $\text{splitheur}(t, \Iv{B}, s)$ 
\begin{itemize}
\item returns a variable of $\Iv{B}$ to split for improving the lower/upper bound
  (depending on $s\in \{ -, +\}$) of
  the interval evaluation of $t$ on $\Iv{B}$, and for which
\item repeated splitting according to this function converges, that is, for
  the sequence $\Iv{B}^1,\Iv{B}^2,\dots$ created by splitting according to this function, for
  every $\varepsilon>0$ there is a $k$ such that for all $i\geq k$ the width
  of $t(\Iv{B}^i)$ is smaller than $\varepsilon$.
\end{itemize}

Right now, we use this function for only the coefficient chosen by our heuristics. It might also
make sense to try it on all coefficients, and choose the best one.

\subsection{Algorithm}
\label{sec:algorithm}

The resulting algorithm is called from the main algorithm in
Step~\ref{item:split}  in the case where in
Step~\ref{item:ivsys} we arrived at an (unsolvable) interval linear system. The
algorithm has the following form (where, for an arithmetical expression $t$ and a  box
$\Iv{B}$, $\IEval{t}{\Iv{B}}$ denotes interval evaluation of $t$ on $\Iv{B}$):

\begin{ntabbing}
\textbf{Input: } \=
expressions $t_{i,j}, i\in\{ 1,\dots, n\}, j\in \{ 1,\dots,
r\}$ s.t. \Style{}{\\\`} $t_{i,j}$ is the coefficient of $x_j$ in $\phi_i$,\\
\> boxes $\Iv{B}_i, i\in \{1,\dots,n\}$\\
\textbf{Output: } $i\in\{1,\dots,n\}, k\in\{1,\dots, s\}$
 \Style{}{\\\`} suggesting to split box $\Iv{B}_i$ at its $k$-th coordinate\\
\textbf{let} $\Iv{P}$ be the $(n\times r)$-interval matrix s.t. $\Iv{P}_{i,j}= \IEval{t_{i,j}}{\Iv{B}_i}$, $i\in\{ 1,\dots, n\}, j\in \{ 1,\dots,
r\}$\label{}\\
$(x^1, x^2)\leftarrow \arg\min \rho$, $\overline{P}x^1-\underline{P}x^2  -b 
\leq [1,\dots,1]^T\rho, x^1\geq 0, x^2\geq 0$ \label{l:solve_relax}\\
$d\leftarrow \overline{P} x^1 - \underline{P} x^2- b$ \label{l:residuum}\`// \textit{residual}\\
$i\leftarrow \arg\max_{i\in \{1,\dots n\}} d_i$ \label{l:max_viol}\`// \textit{box $\Iv{B}_i$ to split}\\
$j\leftarrow\arg\max_{j\in\{1,\dots,r\}} h(\Iv{P}_{i,j}, x^1_j, x^2_j)$\label{l:choose_column}\`// \textit{coefficient to improve}\\
\textbf{return} $i$, $\text{splitheur}(t_{i,j}, \Iv{B}_i, \Sign{x^1-x^2}_j)$ \label{}
\end{ntabbing}

We will call this version of the algorithm the split-worst version. We will also
consider an alternative version that, instead of splitting only the box
corresponding to the maximal constraint violation (as computed in
Line~\ref{l:max_viol}), splits \emph{all} boxes with positive constraint
violation. We will call that version of the algorithm, the split-all version.

As already discussed above, if for the solution $\rho$ computed in
Line~\ref{l:solve_relax}, $\rho\leq 0$, then we know that the interval linear
system $\Iv{P}x\leq \Iv{q}$ from Line~\ref{item:ivsys} of the main algorithm is
solvable. Moreover, since $\rho\leq 0$ is equivalent to the linear system of
Rohn/Kreslov{\'a} being solvable, this computes the same information as
Line~\ref{item:ivsys} of the main algorithm and hence no solving has to be done there. In other
words, instead of solving the Rohn/Kreslov{\'a} linear system of equations, we solve
the linear program \[\min \rho, \overline{P}x^1-\underline{P}x^2  
 -b \leq [1,\dots,1]^T\rho, x^1\geq 0, x^2\geq 0\]
and use it both for determining the overall solution of the algorithm and
heuristics for splitting.

\section{Equality Constraints}
\label{sec:equality-constraints}

Now we analyze the extended problem that---in
addition to constraints of the form $\forall y_1,\dots, y_s\!\in\!
\Iv{B}_i\;\; \phi_i(x_1,\dots,x_r, y_1,\dots y_s)$---also contain linear equalities
over the variables $x_1,\dots, x_r$. 
Viewing each equality as a conjunction of two inequalities one sees that in that
case, the two inequalities force the optimum $\rho^*$ of
\[\min \rho, \overline{P}x^1-\underline{P}x^2  -b\leq
[1,\dots,1]^T\rho, x^1\geq 0, x^2\geq 0\] to be zero. So in this
case, the heuristics in the form described above are not useful. 
In order to handle equalities better, we do not view such equalities as two
inequalities, but we handle them directly. That is we solve the linear program 

\[\min \rho, \overline{P}x^1-\underline{P}x^2  -b\leq
[1,\dots,1]^T\rho, C(x^1-x^2)= d, x^1\geq 0, x^2\geq 0\]
where  $Cx=d$ is the linear system of
equations containing all the  linear equalities.

\section{Convergence}
\label{sec:convergence}

Our main algorithm consists of a loop that continues until a solution has been
found. In this section we will answer the question: Will the loop terminate for
all input constraints? Again we will assume that 
in Step~\ref{item:ivsys} the algorithm already arrived at an
interval linear system of inequalities, since convergence of basic interval
branch-and-bound (or branch-and-prune) algorithms is not difficult to show
(e.g., it follows as a special case of Theorem~6 in~\cite{Ratschan:02f}).

Observe that the only place where the algorithm approximates, is the interval
evaluation in Step~\ref{item:iveval} of the main algorithm. In the whole
section, for the formal proofs, we assume that the resulting linear programs are
solved precisely, using rational number arithmetic. Still, in practice, it
suffices to solve them approximately, for example, based on floating-point
arithmetic.

In the following we will denote by $\Iv{\hat{P}}x\leq \Iv{\hat{q}}$ the
system of linear interval inequalities that would result from the input
constraint if interval evaluation would be non-overapproximating. In a similar
way, we will denote by $\hat{A} z\leq \hat{b}$ the  system of linear
inequalities corresponding to $\Iv{\hat{P}}x\leq \Iv{\hat{q}}$. 

\begin{definition}
  We call $z$ a \emph{robust solution}  of a system of inequalities and equalities
  $Az\leq b \wedge Cz=d$ iff $Az<b\wedge Cz=d$. We call a constraint $\phi$ \emph{robust} if the
  corresponding system $\hat{A} z\leq \hat{b}\wedge Cz=d$ has a robust solution.
\end{definition}

In other words, a robust solution of a system $Az\leq b \wedge Cz=d$ is an
interior point of $Az\leq b$ that satisfies the equalities $Cz=d$.

\begin{lemma}
\label{lem:3}
  If $z$ is a robust solution of   $Az\leq b \wedge Cz=d$, then there is an
  $\varepsilon>0$ s.t. that for all $A'$, and $b'$ differing from $A$ and $b$
  not more than $\varepsilon$ for each entry, $A'z\leq b' \wedge Cz=d$.
\end{lemma}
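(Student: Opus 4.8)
The plan is to exploit the strict inequality in the definition of a robust solution. Since $z$ is a robust solution, we have $Az < b$ strictly (componentwise) together with $Cz = d$. The key observation is that the equality constraint $Cz = d$ is left untouched---we only perturb $A$ and $b$---so it suffices to show that the strict inequality $Az < b$ is preserved under small perturbations of $A$ and $b$.

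First I would quantify the slack in the strict inequalities. Define $\mu := \min_{i} (b - Az)_i$, where the minimum ranges over the rows of the system. Since $Az < b$ holds strictly in every coordinate and there are finitely many rows, we have $\mu > 0$. This positive gap is what gives us room to perturb.

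Next I would estimate how much the left- and right-hand sides of each inequality can move when $A$ and $b$ are replaced by $A'$ and $b'$ with $\|A' - A\|_{\max} \leq \varepsilon$ and $\|b' - b\|_{\max} \leq \varepsilon$. For each row $i$, the change in $(A'z - b')_i - (Az - b)_i$ is bounded by $\varepsilon \sum_{j} |z_j| + \varepsilon = \varepsilon(\|z\|_1 + 1)$, using that the entries of $A$ move by at most $\varepsilon$ each (contributing $|z_j|$ per column) and $b$ moves by at most $\varepsilon$. Hence $(A'z)_i - b'_i \leq (Az)_i - b_i + \varepsilon(\|z\|_1 + 1) \leq -\mu + \varepsilon(\|z\|_1 + 1)$.

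Finally I would choose $\varepsilon$ small enough to make this negative, namely any $\varepsilon < \mu / (\|z\|_1 + 1)$ works, giving $A'z < b' \leq b'$, so in particular $A'z \leq b'$; combined with the untouched $Cz = d$ this yields the claim. There is no real obstacle here---the statement is essentially a continuity/openness argument---but the one point requiring a little care is to note that $z$ is held fixed throughout (we are perturbing the system, not searching for a new solution), so $\|z\|_1$ is a fixed finite constant and the bound on $\varepsilon$ is well-defined and positive.
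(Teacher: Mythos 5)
Your proof is correct and follows exactly the argument the paper sketches: the paper's proof simply notes that $Az-b$ is a vector of negative numbers and that an admissible perturbation bound can be computed from it, which is precisely your slack $\mu>0$ combined with the bound $\varepsilon(\|z\|_1+1)$ and the choice $\varepsilon<\mu/(\|z\|_1+1)$. You have merely filled in the quantitative details the paper leaves implicit (including the observation that $z$ stays fixed and the equalities $Cz=d$ are untouched), so there is nothing to correct.
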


\begin{proof}
  Since $z$ is a robust solution, $Az-b$ is a vector of negative numbers. From
  this we can compute an upper bound on the allowed changes of $A$ and $b$. \QED
\end{proof}

It is not difficult to ensure convergence of the algorithm:

\begin{lemma}
\label{lem:2}
  Assume that the splitting strategy ensures that the width of every bounding
  box $\Iv{B}_i$
  goes to zero. Then the algorithm will terminate for robust inputs.
\end{lemma}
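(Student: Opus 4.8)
The plan is to show that once all bounding boxes are sufficiently narrow, the interval-evaluated system $\Iv{P}x\le\Iv{q}$ is close enough to the idealized non-overapproximating system $\Iv{\hat P}x\le\Iv{\hat q}$ that robustness of the latter forces solvability of the former, causing the main algorithm to terminate in Step~\ref{item:ivsys}. First I would invoke the hypothesis that the splitting strategy drives $\IWidth{\Iv{B}_i}\to 0$ for every $i$, together with the standard convergence property of interval arithmetic: as the widths of the argument boxes shrink to zero, interval evaluation converges to the exact range, so the intervals $\Iv{P}_{i,j}$ and $\Iv{q}_i$ produced in Step~\ref{item:iveval} converge (in both endpoints) to the corresponding entries of $\Iv{\hat P}$ and $\Iv{\hat q}$. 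In particular the centers $\overline P,\underline P,b$ of the interval-evaluated system can be made to differ from the centers $\hat A,\hat b$ of the idealized system by less than any prescribed $\varepsilon$ for each entry, once the boxes are narrow enough.

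The key step is then to apply Lemma~\ref{lem:3}. Since $\phi$ is a robust input, the idealized system $\hat A z\le\hat b\wedge Cz=d$ has a robust solution $z$, and Lemma~\ref{lem:3} furnishes an $\varepsilon>0$ such that every matrix $A'$ and vector $b'$ within $\varepsilon$ (entrywise) of $\hat A,\hat b$ still satisfy $A'z\le b'\wedge Cz=d$. By the convergence of interval evaluation just described, there is a stage of the algorithm beyond which the Rohn/Kreslov\'a system $A z\le b$ associated with the current interval-evaluated $\Iv{P}x\le\Iv{q}$ lies within $\varepsilon$ of $\hat A z\le\hat b$, hence admits $z$ as a solution. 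At that point the linear program in Step~\ref{item:ivsys} (equivalently the residual minimization of Section~\ref{sec:algorithm}) returns $\rho^*\le 0$, the interval linear system is declared solvable, and the algorithm halts with a solution.

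One subtlety I would need to address is that termination actually happens in Step~\ref{item:ivsys}, which presupposes that the Boolean simplifications in Steps~\ref{item:remove}--\ref{item:simplify} have removed all disjunctions so that the constraint has indeed reduced to a pure interval linear system. This is where the robustness of the inequality parts (not just the one linear inequality per $\phi_i$) enters: for each purely-$y$ inequality that is strictly satisfied on its box in the idealized evaluation, once the box is narrow enough its interval evaluation $\Iv{I}\le 0$ acquires a strictly negative upper bound and is replaced by $\True$ in Step~2, collapsing the corresponding disjunction; symmetrically, any clause forced false idealizes correctly. I expect this bookkeeping---arguing that the Boolean structure stabilizes to the all-conjunction form simultaneously with the numeric convergence---to be the main obstacle, whereas the numeric estimate via Lemma~\ref{lem:3} is routine. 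A clean way to organize it is to pick, using robustness, a single $\varepsilon$ small enough to cover both the entrywise matrix/vector perturbation bound and a strict-sign margin on every constant inequality, and then choose the stage at which all box widths are small enough to realize that $\varepsilon$ uniformly across the finitely many constraints.
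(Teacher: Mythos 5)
Your proposal is correct and takes essentially the same route as the paper's proof: both use the hypothesis that box widths go to zero together with convergence of interval arithmetic to argue that the Rohn/Kreslov\'a system $Az\leq b$ approximates the idealized system $\hat{A}z\leq \hat{b}$ arbitrarily well, and then apply Lemma~\ref{lem:3} to the robust solution of $\hat{A}z\leq\hat{b}\wedge Cz=d$ to conclude eventual solvability of the linear program and hence termination. The Boolean-structure subtlety you flag is dispensed with in the paper by the standing assumption announced at the start of Section~\ref{sec:convergence} that the algorithm has already arrived at an interval linear system in Step~\ref{item:ivsys} (citing prior work for the branch-and-prune part), so your extra strict-margin bookkeeping, while sensible, goes beyond what the paper's definition of robustness actually guarantees (it constrains only $\hat{A}z\leq\hat{b}\wedge Cz=d$, not the purely-$y$ inequalities) and is not needed under that assumption.
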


\begin{proof} Assume an arbitrary iteraton of the algorithm. As above, denote by $\Iv{\hat{P}}x\leq \Iv{\hat{q}}$ the interval system of inequalities that the algorithm would compute if using the precise range instead of over-approximating interval evaluation in Step~\ref{item:iveval}. Denote by $\hat{A} z\leq \hat{b}$ the corresponding system of linear inequalities the algorithm computes from $\Iv{\hat{P}}x\leq \Iv{\hat{q}}$ in Step~\ref{item:ivsys}. Assuming, in addition, a system of linear equalities $Cz=d$, 
let $\hat{z}$ be the robust solution of $\hat{A} z\leq
  \hat{b}\wedge Cz=d$, so $\hat{A}
  \hat{z} < \hat{b}$. Due to the fact, that the algorithm does not compute the precise range in Step~\ref{item:iveval}, but over-approximates it using interval evaluation, the algorithm will compute
with an interval matrix $\Iv{P}\supseteq \Iv{\hat{P}}$. The over-approximation error goes to
zero due to convergence of interval arithmetic. Hence $\hat{A}$ will be approximated increasingly
well. So, due to Lemma~\ref{lem:3}, $\hat{A} z\leq \hat{b}\wedge Cz=d$ will eventually hold and the algorithm terminates. \QED
\end{proof}

However, our heuristics do not necessarily ensure that the width of every
bounding box goes to zero: Even if it would ensure that every bounding box is
split infinitely often, it might still happen that the width of some bounding
box does not go to zero, because a certain coordinate of the box is not split
infinitely often. This might not even be necessary for termination, because this
coordinate might correspond to a variable that does not occur in an coefficient
term.

\begin{theorem}
\label{thm:1}
Consider the split-all version of the algorithm with heuristics of the form
\[ \max_{j\in\{1,\dots,n\}} h(\Iv{p}_{j}, x^1_j,x^2_j)\]
where 
\begin{enumerate}[(a)]
\item\label{ass:conv} $\lim_{\IWidth{p}\rightarrow 0 }h(p, x)=0$
\item\label{ass:nonzero} $\IWidth{p}>0$ implies $h(p, x)>0$.
\end{enumerate}
Then we have: If the input constraint has a robust
solution, then the algorithm terminates.
\end{theorem}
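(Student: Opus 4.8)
The plan is to argue by contradiction, assuming the loop never halts. Fix a robust solution $\hat{z}$ of $\hat{A}z\le\hat{b}\wedge Cz=d$ and a nonnegative lift $(\hat{x}^1,\hat{x}^2)$ with $\hat{z}=\hat{x}^1-\hat{x}^2$. Then $\hat{A}\hat{z}<\hat{b}$ holds with a strict margin $\mu>0$, and since splitting a box only shrinks the ranges of its coefficient expressions (interval evaluation is inclusion monotone), $\hat{z}$ still satisfies the refined thin system of every later iteration with margin at least $\mu$. By Lemma~\ref{lem:3} this uniform margin yields a single $\varepsilon_0>0$ such that any system differing from a thin one by at most $\varepsilon_0$ in each entry is still satisfied by $\hat{z}$. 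As the per-entry over-approximation error of $\Iv{P}$ and of the right-hand side is bounded by the corresponding interval width, I would fix a threshold $\delta>0$ so that a leaf box all of whose coefficient widths $\IWidth{\Iv{P}_{l,j}}$ are below $\delta$ has over-approximation error below $\varepsilon_0$, and hence its over-approximated row is satisfied by $\hat{z}$. Call a leaf box \emph{fat} if one of its coefficient widths is at least $\delta$.

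The first main step is to show that whenever the current linear program is infeasible ($\rho^*>0$), some fat leaf has strictly positive residual. By the choice of $\delta$ every non-fat row is strictly satisfied by $\hat{z}$. If, for contradiction, every fat row also had non-positive residual at the minimizer $(x^1,x^2)$, then moving from $(x^1,x^2)$ towards $(\hat{x}^1,\hat{x}^2)$ along a segment---which stays nonnegative and preserves $C(x^1-x^2)=d$---strictly decreases the residual of every binding row (each binding row has residual $\rho^*>0$, so under the assumption it is non-fat and thus strictly satisfied by $\hat{z}$), while all other rows start below $\rho^*$ and remain so for a small step. This lowers the maximum residual below $\rho^*$, contradicting optimality. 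Hence some fat leaf has positive residual, and because we use the split-all version, that fat leaf is split at this iteration.

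The second main step is structural. Since a child box is contained in its parent, coefficient widths are monotone along every lineage, so the fat boxes form a subforest $S_\delta$ of the splitting forest (rooted at $\Iv{B}_1,\dots,\Iv{B}_n$, with binary branching) that is closed under passing to ancestors. I would then show $S_\delta$ has no infinite path, so that, being finitely branching, it is finite by König's lemma. Combined with the first step this finishes the proof: every infeasible iteration splits a fat leaf, each node of $S_\delta$ is split at most once, so only finitely many iterations can be infeasible and the algorithm terminates.

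The hard part is showing that $S_\delta$ has no infinite path, equivalently that along any infinite lineage $\Iv{C}^0\supsetneq\Iv{C}^1\supsetneq\cdots$ all coefficient widths tend to zero. Each width $\IWidth{\IEval{t_{i,j}}{\Iv{C}^t}}$ decreases to a limit $w_j^\infty\ge 0$; for $j$ with $w_j^\infty=0$ assumption~(\ref{ass:conv}) forces $h\to 0$, whereas for a coordinate $j^*$ with $w_{j^*}^\infty>0$ assumption~(\ref{ass:nonzero}) keeps $h$ positive, so the heuristic must eventually select a persistently fat coordinate infinitely often; the convergence guarantee of $\text{splitheur}$ then drives that coordinate's width to zero, contradicting $w_{j^*}^\infty>0$. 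I expect the genuine obstacle to lie in making this precise: one needs the linear-program minimizers $x^1,x^2$ to stay bounded (so that (\ref{ass:conv}) indeed gives $h\to 0$ on vanishing coordinates and, with continuity of $h$, that $h$ stays bounded away from $0$ on a persistently fat coordinate), and one needs the convergence guarantee of $\text{splitheur}$ to survive the extra, only-shrinking splits that other coordinates undergo between successive splits aimed at $j^*$.
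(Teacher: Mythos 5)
Your plan is sound and considerably more structured than the paper's own proof, but it reaches the same crux by a different route. The paper argues directly by contradiction on the sequence of coefficient choices: if the algorithm runs forever, either every coefficient is chosen infinitely often---then convergence of splitheur drives all widths to zero and robustness (via Lemma~\ref{lem:3}, as in the proof of Lemma~\ref{lem:2}) makes every constraint violation eventually non-positive, a contradiction---or some coefficient is chosen only finitely often, in which case, after the last such choice, the infinitely-chosen coefficients have widths, hence by Assumption~(\ref{ass:conv}) $h$-values, tending to zero, while by Assumption~(\ref{ass:nonzero}) the abandoned coefficients keep positive $h$-value, so Line~\ref{l:choose_column} must eventually pick one of them, again a contradiction. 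Your scaffolding---the width threshold $\delta$ derived from the robust margin via Lemma~\ref{lem:3} (using inclusion monotonicity to get a margin uniform over all refinements), the LP-optimality segment argument showing that $\rho^*>0$ forces some fat row to have positive residual, and K\H{o}nig's lemma on the ancestor-closed forest of fat boxes---is entirely absent from the paper and is genuinely valuable: your first main step makes precise what the paper merely asserts (``the constraint will eventually have non-positive constraint violation''), and the fat-forest reduction cleanly localizes the termination question to a single splitting lineage.

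What you leave open---that no lineage stays fat forever---is exactly the content of the paper's dichotomy argument, and the obstacles you flag are real and are not resolved in the paper either: Assumption~(\ref{ass:conv}) is stated as a limit in $p$ with no uniformity in $x$, so one does need the LP minimizers $x^1,x^2$ bounded across iterations; Assumption~(\ref{ass:nonzero}) gives only pointwise positivity of $h$, whereas the paper's step ``Line~\ref{l:choose_column} eventually chooses one of them'' tacitly needs the abandoned coefficients' $h$-values bounded away from zero while the others' vanish; and the convergence guarantee of splitheur is stated for a sequence of splits all made according to splitheur for a single fixed term, whereas in the algorithm splits aimed at different coefficients interleave along one lineage. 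So your proposal is not weaker than the paper's proof: it defers precisely the steps the paper performs informally, and is more rigorous everywhere else. To match the paper's level of completeness you would only need to transplant its dichotomy argument (coefficients chosen infinitely vs.\ finitely often along the lineage) into your no-infinite-fat-path step; to exceed it, you would additionally have to supply the boundedness and uniformity arguments you already identified.
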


\begin{proof}
  We consider the split-all version of the algorithm and assume that the algorithm does not
  terminate. Then it creates an infinite sequence of unsolvable interval linear programs 
  and corresponding linear programs (see Line~\ref{item:ivsys} of the main algorithm). In
  each iteration all $\Iv{B}_i$ with positive constraint violation are
  split. Hence, all those bounding boxes are split infinitely often. In each
  iteration, in Line~\ref{l:choose_column} of the algorithm from
  Section~\ref{sec:algorithm}, a coefficient $j$ for improvement is chosen.
  If all coefficients are chosen infinitely often, then due to
  convergence of splitheur the width of all coefficients goes to zero, which
  implies that the constraint will eventually have non-positive constraint
  violation and the corresponding bounding box would not be chosen for
  splitting, which is a contradiction.

  Hence, non-termination implies that at least one of the coefficients is not chosen infinitely
  often. Let us analyze the state of the algorithm where all coefficients that are
  chosen finitely often will not be chosen any more. All other coefficients are
  chosen infinitely often, which means that due to convergence of splitheur,
  their interval width goes to zero. Hence, due to Assumption~(\ref{ass:conv}),
  their $h$-value goes to zero. Moreover, due to Assumption~(\ref{ass:nonzero})
  the coefficients that are not split any more, have positive $h$-value. This
  implies that  Line~\ref{l:choose_column} eventually chooses one of them, a
  contradiction. So the algorithm terminates. 

\QED
\end{proof}

Clearly, the heuristics\[ \arg \max_{j\in\{1,\dots,r\}} \big[\IWidth{\Iv{p}_j}
[\max \{  x^1_j, x^2_j\}+\varepsilon]\big], \]  with $\varepsilon>0$, as developed in
Section~\ref{sec:heuristics}, fulfill the assumptions of the theorem, and hence
the algorithm converges.

\section{Variable Splitting Heuristics}
\label{sec:variable_splitting}

In this section we discuss, how the function $\text{splitheur}(t, \Iv{B}, s)$, that
we 
introduced in Section~\ref{sec:interval-evaluation}, can be implemented.

A widely used technique (e.g., in global optimization~\cite{Csendes:97}) for
this is to use derivatives of the arithmetical expression. An alternative would
be Corollary 2.1.2 in Neumaier's book~\cite{Neumaier:90}. However, that would
need the computation of interval over-approximation of derivatives, or interval
Lipschitz constants, respectively. Moreover, those
techniques are only a priori estimates of the decrease that might fail to give
exact information. In order to arrive at more precise information, we use the
observation that we already have a fixed set of usually small expressions that we
want to analyze. Hence, interval evaluation of those expressions will usually take
negligible time compared to the rest of the algorithm. Hence we explicitly try
all possible splits and compare their effect on the width of the result of
interval evaluation~\cite{Csendes:01}: 

\[
\mbox{splitheur}(t, \Iv{B}, s)= \arg\max_{i\in \{ 1,\dots, |\Iv{B}|\}} \min_k \{
|b^s(t(\Iv{B}))-b^s(t(\mbox{split}^k(\Iv{B}, i)))| \}  
\]

where $b^s$ takes the upper-/lower bound respectively of the argument interval
according to $s$, and $\mbox{split}^k(\Iv{B}, i)$ denotes the $k$-th box resulting
from splitting the box $\Iv{B}$ at variable $i$ (usually $k\in\{1,2\}$).

However, the method as described up to now does not ensure convergence of the method. The
reason is the following: Assume a term $t$ in $n$ variables. Assume intervals
$\Iv{I}_1,\dots, \Iv{I}_n$ on which we evaluate $t$. Let $\Iv{I}_i^-$ be the lower and $\Iv{I}_i^+$
be the upper half of $\Iv{I}_i$. Assume a procedure that replaces that interval $\Iv{I}_i$
by its lower or upper half, for which this results in the biggest decrease of
interval evaluation of $t$. Repeated application of this procedure does not
result in the width of interval evaluation going to zero. For example, for the
term $x^2+y$ with $x\in[-1,1]$, $y\in[-0,2]$, splitting $[-1,1]$ does not result
in any improvement at all. However, it is necessary for global convergence.

One way of solving this problem is, to take the time since the last split into
account. For example, we could use
\begin{multline*}
\mbox{splitheur}(t, \Iv{B}, s)=\\ \arg\max_{i\in \{ 1,\dots, |\Iv{B}|\}} \big[c(i)+\min_k \{
|b^s(t(\Iv{B}))-b^s(t(\mbox{split}^k(\Iv{B}, i)))|\}  \big] 
\end{multline*}

where $c(i)$ is a function that increases with the time of the last split of
variable $i$. If this function goes to infinity with the time of the last split,
then every variable will be split eventually, ensuring convergence. The result
is some compromise between round-robin-splitting (which ensures convergence) and
aggressive local improvement. In order to make this heuristics independent of
the size of $\Iv{B}$ (which decreases during the algorithm) it makes sense to use
some scaling with $\IWidth{t(\Iv{B})}$  in the function $c(i)$.

\Long{Instead of analyzing the result of splitting, analyze the common face of the two
splits? }




\section{Computational Experiments}
\label{sec:experiments}

We did experiments on examples for computing Lyapunov-like
functions~\cite{Ratschan:10}, with the heuristic function 
\[ \arg \max_{j\in\{1,\dots,r\}} \big[\IWidth{\Iv{p}_j} [\max \{  x^1_j, x^2_j\}+\varepsilon]\big] \]
and $\varepsilon=0.001$.

For the resulting examples we have $\phi_1=\dots=\phi_r$ with different bounding
boxes for each branch $i\in\{1,\dots,r\}$. The bounding boxes and the inequality
constraints of the examples are as follows:

Example A: 
\[
\begin{array}{ll}
\Iv{B}_1=[0.8, 1.2] \times [0.3,0.49],& \Iv{B}_2= [0.8, 1.2] \times [0.51,0.7],\\
\Iv{B}_3=[1.01, 1.2]\times [0.49,0.51],& \Iv{B}_4=[0.8, 0.99]\times [0.49,0.51]
\end{array}
\]

where $\phi$ is of the form
\begin{multline*}
x_1(2 y_1^3 y_2 -2 y_1^2 + y_1)+x_2 (y_1^2 y_2- y_1 +0.5)+x_3(y_1^2 y_2^2- y_1^3y_2- y_1 y_2 +0.5 y_1 +0.5 y_2)+\\ x_4 (0.5- y_1 y_2^2)+x_5((-2) y_1^2 y_2^2 + y_2)\leq-0.0001
\end{multline*}

Example B: 
\[
\begin{array}{ll}
\Iv{B}_1= [-0.8, 0.8]\times [-0.8,-0.1], &
\Iv{B}_2= [-0.8, 0.8]\times [0.1,0.8], \\
\Iv{B}_3= [-0.8, -0.1]\times [-0.1,0.1], &
\Iv{B}_4= [0.1, 0.8], [-0.1,0.1]
\end{array}\]

where $\phi$ is of the form
\[
x_1 (-2 y_1^2+2 y_1 y_2)+ x_2 (0.2 y_1 y_2-4 y_2^2-2 y_1^2 y_2-0.2 y_1^3 y_2)\leq -0.0001
\]
 
Example C: 
\[
\begin{array}{ll}
\Iv{B}_1=[-0.4, 0.4]\times [-0.4,-0.1], &
\Iv{B}_2=[-0.4, 0.4]\times [0.1,0.4], \\
\Iv{B}_3=[-0.4, -0.1]\times [-0.1,0.1], &
\Iv{B}_4=[0.1, 0.4]\times [-0.1,0.1] 
\end{array}\]

where $\phi$ is of the form
\begin{multline*}
x_1 (-16 y_1^6+24 y_1^5-8 y_1^4)+x_2 (-12 y_1^5+18 y_1^4-6 y_1^3)+\\x_3 (-8 y_1^4+12 y_1^3-4 y_1^2)+x_4 (-4 y_2^2)\leq
-0.000001
\end{multline*}

Example D: 

\[
\begin{array}{ll}
\Iv{B}_1=[-0.2, 0.2]\times [-0.2,0.2]\times [-0.2,-0.1],&
\Iv{B}_2=[-0.2, 0.2]\times [-0.2,0.2]\times [0.1,0.2], \\
\Iv{B}_3=[-0.2, 0.2]\times[-0.2,-0.1]\times [-0.1,0.1]&
\Iv{B}_4=[-0.2, 0.2]\times [0.1,0.2]\times [-0.1,0.1],\\
\Iv{B}_5=[-0.2, -0.1]\times [-0.1,0.1]\times [-0.1,0.1],&
\Iv{B}_6=[0.1, 0.2]\times [-0.1,0.1]\times [-0.1,0.1]
\end{array}\]

where $\phi$ is of the form
\begin{multline*}
x_1 (-2 y_1 y_2)+x_2 (-2 y_2 y_3)+x_3 (-2 y_3^2-2 y_1 y_3+ 2 y_1^3 y_3)+x_4 (- y_1^2 - y_1 y_3)+\\x_5 (y_1^2-2 y_1
y_2 - y_1 y_3 - y_2 y_3+y_1^4)+x_6 (-2 y^2-y_3^2-y_1 y_2-y_2 y_3 +y_1^3 y_2)\leq -0.0001  
\end{multline*}

In all four cases, we normalized the first coefficient $a$ to $1$. To create
versions with equality constraints we used the pre-processing method described 
in Section~4.3. of ~\cite{Ratschan:10}. We will denote the result by $A'$, $B'$,
$C'$ and $D'$.

The results of the experiments can be seen in Figure~\ref{fig:results}. Here, round-robin refer to
the classical round-robin splitting heuristics where variables are split one
after the other, and that we used in earlier work~\cite{Ratschan:10}. 
Empty entries correspond to cases where the algorithm did not terminate within
$10$ minutes, and $\varepsilon$ corresponds to cases where the algorithm
terminates in less than $0.1$ seconds. The experiments were done based on an
implementation in the programming language Objective Caml using the LP solver
Glpk, on a Linux operating system and a 64-bit 2.80GHz processor).

\begin{figure}[htb]
  \centering

\begin{tabular}{|l|r|r|r|r|r|r|}\hline
  & \multicolumn{2}{c|}{round-robin} & \multicolumn{2}{c|}{split-worst} &
  \multicolumn{2}{c|}{split-all} \\\hline
   & splits & time & splits & time & splits & time\\\hline
A & 48 & 0.260 & 21 & $\varepsilon$ & 5 & $\varepsilon$ \\
B & 545 & 8.777 & 112 & 0.648 & 6 & $\varepsilon$\\
C &     &       & 20 & 0.128 & 5 & $\varepsilon$\\
D &    &        & 719 & 70.176 & 10 & 2.472\\
A' & 0 & $\varepsilon$ & 0 & $\varepsilon$ & 0 & $\varepsilon$\\
B' & 719 & 19.233 & 82 & 0.300 & 9 & $\varepsilon$\\
C' & 244 & 6.472 & 4 & $\varepsilon$ & 3 & $\varepsilon$ \\
D' & 0 & $\varepsilon$ & 0 & $\varepsilon$ & 0 & $\varepsilon$ \\\hline
\end{tabular}

  \caption{Results of Experiments}
  \label{fig:results}
\end{figure}



\section{Conclusion}
\label{sec:conclusion}

We have shown how to efficiently solve a class of quantified
constraints. Computational experiments show that the corresponding splitting
heuristics result in efficiency improvements by orders of magnitude. 

Possibilities for further research include:
\begin{itemize}
\item The application of the algorithm to areas such as termination
  analysis~\cite{Cousot:05,Lucas:07,Podelski:04}. 
\item The extension of the algorithm to a more general class of constraints, for
  example to ensure applicability in invariant computation~\cite{Rodriguez-Carbonell:04,Sankaranarayanan:08a,Sturm:11}.
\end{itemize}

\bibliographystyle{plain}
\bibliography{sratscha}

\Long{
\appendix

\section{Further Ideas}

send paper to: Lars Gruene, Fabian Wirth, Peter Giesl etc

Comparison with Rybalchenko's Farkas based approach: We have
interval coefficients, whereas his constraint ensuring that the ranking
function decreases has constant coefficients. Instead of the box constraint
$y\in \Iv{B}$ he has a polyhedral constraint resulting from the conjunction forming
the loop condition. The intersection of both cases (box constraint and constant
coefficients) both methods (probably) behave similarly since in that case
Rohn/Kresolva does not increase number of variables\footnote{in theory, but how
  does our implementation behave?} and solves an LP (with many constraints but
few variables), whereas Rybalchenko solves its dual.

add monotonicity check (we
only have to check boundaries and parts of state space, where derivatives might
be zero), in case of ODEs: add with ODE diagonalization, different than monomial basis?

right now we ignore the fact that in the LP, we have pairs of coefficients
resulting from the same interval, exploit this?

Different vector than $[1,\dots,1]^T$? changing $\varepsilon$ during
computation? non-uniform $\varepsilon$?

heuristics for Boolean structure?

applications:
\begin{itemize}
\item computation of terminal cost function in model predictive control
\item 
Lyapunov functions for distance measure in planning (see LaValle
book). 
\item 
Verification of hybrid systems---relaxation of barrier: function that is
negative on initial positive on unsafe and decreases in between

Any different way of choosing best coefficient to modify? see parametric
programming, post-optimality/sensitivity analysis (LP books available on UI network), Chinneck book,
book on multi-parametric programming; parametric programming in encycl. of optimization

check against linear algebra preprocessing
\end{itemize}
}

\end{document}